\newtheorem{Corollary}{Corollary}
\newtheorem{theorem}{$\mathbf{Theorem}$}
\newtheorem{corollary}[Corollary]{$\mathbf{Corollary}$}
\begin{document}%
\title{ Cooperative Energy Harvesting Networks with Spatially Random Users}

\author{ Zhiguo Ding, \IEEEmembership{Member, IEEE} and H. Vincent Poor, \IEEEmembership{Fellow, IEEE}
\thanks{  The authors are with Department of Electrical Engineering, Princeton University, Princeton, NJ, 08544, USA. Z. Ding is also  with  School of
Electrical, Electronic, and Computer Engineering Newcastle
University, UK.    }} \vspace{-4em}\maketitle\vspace{-7em}
\begin{abstract}
This paper  considers a cooperative network with multiple source-destination pairs and one energy harvesting relay. The outage probability experienced by users in this network is characterized by taking the spatial randomness of user locations into consideration. In addition, the cooperation among users is modeled as a canonical coalitional game and the grand coalition is shown to be stable in the addressed scenario. Simulation results are provided to demonstrate the accuracy of the developed analytical results.
\vspace{-2em}
\end{abstract}
\section{Introduction}
Simultaneous power and signal transfer has been  recognized as a promising energy harvesting technique, and it is particularly  important to various energy constrained wireless networks without access to natural light or wind sources \cite{varshney08,Grover10,Zhouzhang13}. Wireless power and signal transfer has been studied in multiple antenna systems in \cite{Zhangruipower2013}, and its extension to the case with imperfect channel state information (CSI) at the transmitter has been considered in \cite{Xiangzt12}. This new concept of energy harvesting is ideal for cooperative communication  networks, in which the relay transmissions can be powered by the energy harvested from the incoming signals \cite{Nasirzhou}.

In this paper, we consider a cooperative network with multiple source-destination pairs communicating with each other via an energy harvesting relay. The contribution of this paper is two-fold. {\it Firstly} the spatial randomness of user locations is taken into consideration when the outage probability experienced by users is characterized. In the context of wireless power transfer, this   randomness is particularly important since it determines the distance between the transceivers and hence describes the energy attenuation of the transmitted  signals.  {\it Secondly} the cooperation among users is modeled as a canonical coalitional game, and we show that in the high SNR regime, a grand coalition is always preferred, which means when forming a larger cooperative group the users cannot do worse than by acting alone. Simulation results are provided to demonstrate the accuracy of the developed analytical results.
\section{Cooperative Energy Harvesting Transmissions}
Consider a scenario with $N$ pairs of sources and destinations communicating  with each other via an {\it energy harvesting} relay. Particularly the locations of $2N$ nodes   are distributed uniformly in a disc, denoted by $\mathcal{D}$, with the relay located at its origin and $D$ as its radius.  Let  $h_i$ denote  the channel gain between the $i$-th source and the relay, which we assume to be complex Gaussian (i.e., we assume Rayleigh fading), $d_i$ denote the corresponding distance, and $\alpha$ denote the path loss factor. Similarly  $g_i$   and  $c_i$ are defined for the channels of the destinations.   Prior to transmissions, no CSI, including channel gains and   distances, is available.

The set of all pairs of nodes is denoted by $\mathcal{N}$. Prior to transmissions, user pairs who are willing to cooperate with each other form coalitions.  Let $\mathcal{S}_k\in \mathcal{N}$ denote a coalition consisting of $|\mathcal{S}_k|$ source-destination pairs. $\mathcal{S}\triangleq\{\mathcal{S}_1, \cdots, \mathcal{S}_K\}$ is a partition of $\mathcal{N}$, if  $\mathcal{N}=\cup^{K}_{k=1}\mathcal{S}_k$ and $\mathcal{S}_i\cap \mathcal{S}_j=0$, for any $i\neq j$. Denote the set of all possible partitions by $\mathcal{B}$.   Define $P_i^{ow}$ as  the transmission power of the $i$-th source, and $\tilde{P}^{ow}_i$ as the  relay transmission power allocated to the $i$-th destination.   The cooperative transmission strategy is described as follows:
\begin{itemize}
\item Phase I: Each source sends its message to the relay, and the sources of the same coalition   cooperate  in the sense that their power will be shared under the total power constraint, i.e. $\sum_{i\in \mathcal{S}_k}P_i^{ow}=P^{total}\triangleq |\mathcal{S}_k|P^{ow}$. Note that the sources communicate with the relay via orthogonal channels, and there is no direct source-destination  link.

\item Phase II: For each received transmission, the relay first tries to decode the message, and then carries out energy harvesting if there is any power left after decoding. Then the relay forwards  the correctly decoded messages to the destinations. Let $\hat{P}^{ow}_i$ denote  the power that is harvested from the $i$-th source message.  The user pairs in  the same coalition are cooperating in the sense that their relaying power will be shared under the total harvested energy    constraint, i.e. $\sum_{i\in \mathcal{S}_k}\tilde{P}_i^{ow}=\sum_{i\in \mathcal{S}_k}\hat{P}_i^{ow}$.
\end{itemize}
$\hat{P}_i^{ow}$ is calculated based on the energy harvested from the $i$-th source message, denoted by ${E}_i$, i.e. $\hat{P}_i^{ow}=\frac{ {E}_i}{\frac{T}{2}}$, where $T$ is the duration of one cooperative frame.  For each observation, the signal-to-noise ratio (SNR) threshold for correct  decoding is $\epsilon\triangleq \left(2^{2R}-1\right)$ to ensure $\frac{1}{2}\log (1+SNR)>R$, where $R$ denotes the targeted data rate. Any power left after decoding  will be used for energy harvesting, i.e. $E_i=\frac{T\eta}{2}  \left(\frac{|h_i|^2}{1+d_i^\alpha}P_i^{ow} - \epsilon\right)$, where $\eta$ denotes the energy harvesting efficiency.

  Therefore, the relationship between $\hat{P}_i^{ow}$ and $P_i^{ow}$ depends on the channel gain and distance between the $i$-th source and the relay, as described in the following:
\begin{equation}\label{eq power}
\hat{P}_i^{ow} =\left\{\begin{array}{ll}\eta  \left(\frac{|h_i|^2}{1+d_i^\alpha}P_i^{ow} - \epsilon\right), &\mathrm{if}\quad \frac{|h_i|^2}{1+d_i^\alpha}P_i \geq \epsilon   \\ 0, &\mathrm{if}\quad \frac{|h_i|^2}{1+d_i^\alpha}P_i < \epsilon \end{array}\right..
\end{equation}

Let $\mathrm{P}_i$ denote the outage probability experienced by the $i$-th user pair, and therefore the achieved data rate between this pair is $R(1-\mathrm{P}_i)$. A natural choice for the coalition value of   $\mathcal{S}_k$ is $
v(\mathcal{S}_k) = \sum_{i\in \mathcal{S}_k} R\cdot (1-\mathrm{P}_i )$.
The questions to be studied in this paper are how many coalitions should be formed, and how   a user should choose a coalition to join.   Such questions can be formulated in the following optimization problem:
\begin{eqnarray}\label{original problem}
&\underset{P_i, \mathcal{S}\in \mathcal{B}}{\mathrm{\max} } &\sum_{\mathcal{S}_k\in \mathcal{S}} \left(\sum_{i\in \mathcal{S}_k} R\cdot(1-\mathrm{P}_i ) \right)\\ \nonumber &\mathrm{s.t.}& \sum_{i\in \mathcal{S}_k}P_i^{ow} = |\mathcal{S}_k| P^{ow}, \sum_{i\in  {\mathcal{S}}_k}\tilde{P}_i^{ow}= \sum_{i\in \mathcal{S}_k}\hat{P}_i^{ow}.
\end{eqnarray}

\section{Canonical Coalitional Game}
The optimization problem shown in \eqref{original problem} can be   simplified   by taking the following steps.
\begin{itemize}
\item Since no CSI  is known prior to transmissions, all user pairs are equally important, which means  that equal power allocation is optimal, i.e. $P_i^{ow} = P^{ow}$.

\item Let $\tilde{\mathcal{S}}_k$ denote the subset of $\mathcal{S}_k$ containing all the sources whose information is successfully delivered to the relay.  When no CSI is available  at the relay, all   users are equally important, so equal power allocation is optimal, i.e. $\tilde{P}_i^{ow}=\eta|\tilde{\mathcal{S}}_k|^{-1} \sum_{j\in\tilde{\mathcal{S}}_k} \left(\frac{|h_j|^2}{1+d_j^\alpha}P_j^{ow} - \epsilon\right)$.
\end{itemize}
By applying these steps, an equivalent form of \eqref{original problem} can be expressed as follows:
{\small
\begin{eqnarray}\label{revised problem}
&\underset{P_i, \mathcal{S}\in \mathcal{B}}{\mathrm{\max} } &\sum_{\mathcal{S}_k\in \mathcal{S}} \left(\sum_{i\in \mathcal{S}_k} R\cdot(1-\mathrm{P}_i ) \right)\\ \nonumber &\mathrm{s.t.}& P_i^{ow} =  P^{ow}, \tilde{P}_i^{ow}=\frac{\eta}{|\tilde{\mathcal{S}}_k|} \sum_{j\in\tilde{\mathcal{S}}_k} \left(\frac{|h_j|^2}{1+d_j^\alpha}P_j^{ow} - \epsilon\right).
\end{eqnarray}}

Evaluating  the outage probability is the key to finding  the solution of \eqref{revised problem}, as discussed in the next section. The path loss factor $\alpha$ is set as $2$, in order to obtain closed-form expressions.  \vspace{-1.2em}

\subsection{Characterizing $\mathrm{P}_i$ }
The following theorem provides the high SNR approximations for the outage probability.
\begin{theorem}\label{theorem1}{\it
In the high SNR regime,  the outage probability experienced by  the $i$-th user pair, $i\in\mathcal{S}_k$, can be approximated  as follows:
\begin{align} \label{high snr}
\mathrm{P}_i &\approx  F_{x_i}\left(\frac{\epsilon}{P^{ow}}\right)+ \sum^{|\mathcal{S}_k|}_{n=1}\frac{n}{| {\mathcal{S}}_k|} \frac{ a^{n}  }{(n-1)! }  \left((n-1)!a^{-n} \right. \\ \nonumber &-\frac{2}{D^2\varpi}  \left(\frac{\varpi}{a}\right)^{\frac{n+1}{2}} \mathrm{K_{n+1}}\left(2\sqrt{\varpi a}\right)+\frac{2}{D^2\varpi}\\ \nonumber &\times\left.  \left(\frac{(1+D^2)\varpi}{a}\right)^{\frac{n+1}{2}} \mathrm{K_{n+1}}\left(2\sqrt{(1+D^2)\varpi a}\right)\right)\\ \nonumber & \times\frac{|\mathcal{S}_k|!}{(|\mathcal{S}_k|-n)!n!}  \left(F_{x_i}\left(\frac{\epsilon}{P^{ow}}\right) \right)^{|\mathcal{S}_k|-n}   \left(1-F_{x_i}\left(\frac{\epsilon}{P^{ow}}\right) \right)^{n},
\end{align}
where $F_{x_i}(z)\triangleq  1- \frac{e^{-z}}{D^2z}+\frac{e^{-(1+D^2)z}}{D^2z}$, $a=\frac{1}{2}(D^2+2)$, $\varpi=\frac{n\epsilon}{\eta P^{ow}}$, and $\mathrm{K}_n(\cdot)$ denotes the modified Bessel function of the second kind.  }
\end{theorem}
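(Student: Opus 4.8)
The plan is to split the outage event for pair $i$ into a first-hop failure and a (first-hop success, second-hop failure) event, and to evaluate each piece in the high-SNR limit. Writing $x_i\triangleq\frac{|h_i|^2}{1+d_i^\alpha}$ for the effective source-to-relay channel (with $\alpha=2$), the first task is its CDF. Since $|h_i|^2$ is unit-mean exponential (Rayleigh fading) and, for a node placed uniformly in the disc of radius $D$, $d_i^2$ is uniform on $[0,D^2]$, conditioning on $d_i^2=y$ and integrating gives $F_{x_i}(z)=1-\frac{1}{D^2}\int_0^{D^2}e^{-z(1+y)}\,dy$, which evaluates to the stated $F_{x_i}(z)=1-\frac{e^{-z}}{D^2z}+\frac{e^{-(1+D^2)z}}{D^2z}$. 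The effective destination channel $\frac{|g_i|^2}{1+c_i^2}$ is identically distributed, hence shares this CDF. A Taylor expansion of $F_{x_i}$ at the origin gives $F_{x_i}(z)\approx az$ with $a=\frac12(D^2+2)$, i.e. to first order the effective channel is exponential with rate $a$; this is the high-SNR surrogate I will use for the harvested-energy terms.

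Next I would condition on the first hop. Source $i$ fails on the first hop with probability $F_{x_i}(\epsilon/P^{ow})$, which is the leading term. Given that source $i$ succeeds, let $n$ be the total number of sources in $\mathcal{S}_k$ reaching the relay; since the $|\mathcal{S}_k|$ links are i.i.d.\ and each succeeds with probability $1-F_{x_i}(\epsilon/P^{ow})$, the probability that exactly $n$ succeed with $i$ among them is $\frac{n}{|\mathcal{S}_k|}\binom{|\mathcal{S}_k|}{n}(F_{x_i}(\epsilon/P^{ow}))^{|\mathcal{S}_k|-n}(1-F_{x_i}(\epsilon/P^{ow}))^{n}$, the factor $n/|\mathcal{S}_k|$ arising from the symmetric chance that $i$ is one of the $n$ successful pairs. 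This reproduces the combinatorial weight in the statement.

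It then remains to compute the second-hop outage given $n$ successes. With equal sharing the relay power to destination $i$ is $\tilde P_i^{ow}=\frac{\eta}{n}\sum_{j}(x_jP^{ow}-\epsilon)$ over the successful sources, so the second hop fails iff the destination channel falls below $\frac{n\epsilon}{\eta\sum_j(x_jP^{ow}-\epsilon)}$. Keeping the destination CDF $F_{x_i}$ exact is essential here, because its boundedness near the origin regularises an otherwise non-integrable $1/\Sigma$ (the naive linear approximation $F_{x_i}(u)\approx au$ already diverges at $n=1$). Using the high-SNR simplifications $\sum_j(x_jP^{ow}-\epsilon)\approx P^{ow}\sum_jx_j$ and $x_j\sim\mathrm{Exp}(a)$, the sum $S=\sum_{j=1}^n x_j$ becomes $\mathrm{Gamma}(n,a)$ and the conditional outage reduces to $\int_0^\infty F_{x_i}(\varpi/s)\,\frac{a^n}{(n-1)!}s^{n-1}e^{-as}\,ds$ with $\varpi=\frac{n\epsilon}{\eta P^{ow}}$. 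Substituting the closed form of $F_{x_i}$ turns each of its two exponential pieces into an integral of the type $\int_0^\infty s^{n}e^{-b\varpi/s-as}\,ds=2(b\varpi/a)^{(n+1)/2}\mathrm{K}_{n+1}(2\sqrt{b\varpi a})$ for $b\in\{1,\,1+D^2\}$, which is precisely the integral representation of the modified Bessel function $\mathrm{K}_{n+1}$. Collecting the constant term together with the two Bessel terms yields the bracketed factor, and multiplying by the combinatorial weight and summing over $n$ gives \eqref{high snr}.

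The anticipated obstacle is the single approximation that is not exact: replacing the true effective-channel law by the rate-$a$ exponential inside the harvested-power sum. This is only a first-order match at the origin, yet the Bessel integrand is concentrated near $s\approx n/a$, where the two laws differ, so some care (or an explicit high-SNR order argument) is needed to argue that the discrepancy is negligible relative to the retained terms. The remaining manipulations — the CDF derivation, the binomial bookkeeping, and the Bessel integral — are routine once this approximation is in place.
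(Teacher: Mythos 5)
Your proposal is correct and follows essentially the same route as the paper's proof: the same CDF of the effective channel, the same decomposition into first-hop failure plus a binomially weighted second-hop failure, the same rate-$a$ exponential surrogate for the source channels making the sum Gamma$(n,a)$, and the same Bessel-function evaluation of the resulting integral (the paper reaches the identical integral by carrying the shift $n\epsilon/P^{ow}$ through a conditional Laplace transform and then substituting, rather than dropping it upfront as you do). Your remarks on keeping the destination CDF exact and on the limitation of the exponential surrogate are consistent with, and if anything slightly more careful than, the paper's own justification.
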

\begin{proof}
Please refer to the appendix.
\end{proof}
As demonstrated by the simulations in the next section, the analytical results in Theorem \ref{theorem1} are accurate even in a moderate SNR regime. However, these results are still too involved to  be directly used for the analysis of coalition formation, which motivates the following corollary with more explicit expressions.
 \begin{corollary}\label{corollary}{\it
 When the SNR approaches  infinity, i.e. $P^{ow}\rightarrow \infty$,     an asymptotic expression for $\mathrm{P}_i$,  $i\in\mathcal{S}_k$, is
\begin{align} \label{sym snr}
\mathrm{P}_i &\rightarrow \left( \frac{\epsilon D^2}{2}+  \frac{\left( D^2+2    \right)^2  \epsilon }{4\eta }        \frac{|\mathcal{S}_k|}{(|\mathcal{S}_k|-1) }     \right)\frac{1}{P^{ow}}
\end{align}
if   $|\mathcal{S}_k|\geq 2$. And the outage probability for the users in singleton sets, i.e. $|\mathcal{S}_k|=1 $, is given by
\begin{align} \label{sym snr2}
\mathrm{P}_i &\rightarrow \left( \frac{\epsilon D^2}{2}+  \frac{ \epsilon(D^2+2) }{2\eta D^2 }  \left[    \ln \sqrt{\frac{a\epsilon}{\eta P^{ow}}}+c_0  \right.\right.\\ \nonumber &\left.\left.-  (1+D^2)^2\left(\ln \sqrt{\frac{(1+D^2)\epsilon a}{\eta P^{ow}}}+c_0\right)  \right] \right)\frac{1}{P^{ow}}.
\end{align}
 }
\end{corollary}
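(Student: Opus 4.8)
The plan is to read off the Corollary directly from the high-SNR formula \eqref{high snr} of Theorem \ref{theorem1} by letting $P^{ow}\rightarrow\infty$ and retaining only the terms of order $\frac{1}{P^{ow}}$. In this limit both small quantities $z\triangleq\frac{\epsilon}{P^{ow}}$ and $\varpi=\frac{n\epsilon}{\eta P^{ow}}$ vanish, so the whole task reduces to a careful small-argument expansion of the two ingredients of \eqref{high snr}: the CDF $F_{x_i}$ and the modified Bessel functions $\mathrm{K}_{n+1}$. I will treat $|\mathcal{S}_k|\geq 2$ and $|\mathcal{S}_k|=1$ separately, since the two cases differ precisely in whether a logarithmic Bessel term is activated.

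First I would expand the first-hop term. Substituting $e^{-z}=1-z+\frac{z^2}{2}-\cdots$ and $e^{-(1+D^2)z}=1-(1+D^2)z+\cdots$ into $F_{x_i}(z)=1-\frac{e^{-z}}{D^2z}+\frac{e^{-(1+D^2)z}}{D^2z}$, the singular $\frac{1}{D^2z}$ contributions cancel and so do the constants, leaving $F_{x_i}(z)=az+O(z^2)$ with $a=\frac{D^2+2}{2}$; hence $F_{x_i}\!\left(\frac{\epsilon}{P^{ow}}\right)\rightarrow\frac{a\epsilon}{P^{ow}}$ as the first-hop contribution. For the bracketed factor $T_n$ in the summand I would make the change of variable $u=\sqrt{\lambda\varpi a}$, with $\lambda=1$ for the first Bessel term and $\lambda=1+D^2$ for the second, which collapses each Bessel piece to $B(\lambda)\triangleq\frac{2\lambda}{D^2a^{n}}\,u^{n-1}\mathrm{K}_{n+1}(2u)$, so that $T_n=(n-1)!a^{-n}-B(1)+B(1+D^2)$.

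Next I would insert the standard small-argument series for the integer-order modified Bessel function,
\begin{align}\nonumber \mathrm{K}_{n+1}(2u)=\frac{1}{2}u^{-(n+1)}\sum_{k=0}^{n}\frac{(n-k)!}{k!}(-u^2)^k+(-1)^{n}\ln(u)\,\mathrm{I}_{n+1}(2u)+\cdots,\end{align}
and track the cancellations in $T_n$. The key point is a \emph{double} cancellation: the $u^{-2}$ terms (from $k=0$) are independent of $\lambda$ and cancel between the two Bessel pieces, while the constant ($k=1$) terms combine with the leading $(n-1)!a^{-n}$ and also vanish, precisely because $1-(1+D^2)=-D^2$. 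What remains is the next order. For $n\geq 2$ the $k=2$ polynomial term survives and gives $T_n=(n-2)!\,a^{2-n}\varpi+o(\varpi)$, whereas for $n=1$ the finite polynomial sum terminates at $k=1$ and the logarithmic term of $\mathrm{K}_2$ becomes dominant, producing a $\varpi\ln\varpi$ contribution whose constant is fixed by $\psi(1)+\psi(3)$.

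Finally I would assemble the two regimes. When $|\mathcal{S}_k|\geq 2$, the factor $\big(F_{x_i}\big)^{|\mathcal{S}_k|-n}$ suppresses every summand with $n<|\mathcal{S}_k|$ to order $o\!\left(\frac{1}{P^{ow}}\right)$, so only $n=|\mathcal{S}_k|$ survives; inserting $T_{|\mathcal{S}_k|}=(|\mathcal{S}_k|-2)!\,a^{2-|\mathcal{S}_k|}\varpi$ with $\varpi=\frac{|\mathcal{S}_k|\epsilon}{\eta P^{ow}}$ and simplifying the factorials yields $\frac{(D^2+2)^2\epsilon}{4\eta}\frac{|\mathcal{S}_k|}{|\mathcal{S}_k|-1}\frac{1}{P^{ow}}$, which together with the first-hop term gives \eqref{sym snr}. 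When $|\mathcal{S}_k|=1$ only the $n=1$ summand is present, and carrying its logarithmic and digamma constants through, with $u$ evaluated at $\sqrt{\frac{a\epsilon}{\eta P^{ow}}}$ and $\sqrt{\frac{(1+D^2)a\epsilon}{\eta P^{ow}}}$, reproduces the bracketed expression of \eqref{sym snr2} and identifies $c_0$ with $\gamma-\frac{3}{4}$. I expect the delicate double cancellation in $T_n$, together with the bookkeeping of the integer-order Bessel logarithms (especially the $n=1$ term), to be the main obstacle; the remaining steps are routine Taylor expansions and factorial simplifications.
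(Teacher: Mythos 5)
Your proposal is correct and follows essentially the same route as the paper's own proof: a small-argument series expansion of the modified Bessel functions in \eqref{high snr}, the double cancellation that leaves an $O(\varpi)$ remainder for $n\ge 2$ and a $\varpi\ln\varpi$ term for $n=1$ (whence the logarithmic decay for singletons), and the observation that the factor $\left(F_{x_i}(\epsilon/P^{ow})\right)^{|\mathcal{S}_k|-n}$ suppresses every summand except $n=|\mathcal{S}_k|$. The only friction is that your (correct) first-hop expansion $F_{x_i}(\epsilon/P^{ow})\rightarrow a\epsilon/P^{ow}=\frac{(D^2+2)\epsilon}{2P^{ow}}$ does not literally match the $\frac{\epsilon D^2}{2P^{ow}}$ term appearing in \eqref{sym snr}; the paper's own proof asserts that constant without derivation, so this is a discrepancy in the paper's statement rather than a gap in your argument.
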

\begin{proof}
Please refer to the appendix.
\end{proof}\vspace{-1em}

\subsection{Characterizing   the addressed game}
Based on the above theorem and corollary, we can show that a grand coalition is always preferred.
\begin{corollary}\label{corollary 2}{\it
When the SNR approaches infinity, the optimization problem   in \eqref{original problem} becomes a canonical coalitional game with transferable utility (TU), and a grand coalition is stable.}
\end{corollary}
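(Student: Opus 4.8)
The plan is to establish two things: that the game defined in \eqref{original problem} is a \emph{canonical coalitional game with transferable utility}, and that the \emph{grand coalition} $\mathcal{N}$ is stable, i.e. that the core of this game is nonempty and contains an allocation supporting $\mathcal{N}$. Recall that a coalitional game in characteristic form is canonical with TU when the value $v(\mathcal{S})$ of each coalition is a real number independent of actions taken outside $\mathcal{S}$, and is \emph{superadditive}, meaning $v(\mathcal{S}_1 \cup \mathcal{S}_2) \geq v(\mathcal{S}_1) + v(\mathcal{S}_2)$ for disjoint $\mathcal{S}_1,\mathcal{S}_2$. The utility here is transferable because the coalition value $v(\mathcal{S}_k)=\sum_{i\in\mathcal{S}_k} R(1-\mathrm{P}_i)$ is a sum of individual rates that can be reallocated among members, and it is independent of the partition of the complementary users since, in this model, distinct coalitions use orthogonal channels and share only their own harvested energy. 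I would first state these two structural facts explicitly and justify the TU/characteristic-form property from the transmission protocol.

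The heart of the argument is superadditivity, and this is where Corollary~\ref{corollary} does the work. First I would observe that, by the asymptotic expressions, the per-user outage probability $\mathrm{P}_i$ for $i\in\mathcal{S}_k$ with $|\mathcal{S}_k|\geq 2$ scales as $\frac{1}{P^{ow}}$ times a coefficient whose energy-harvesting term carries the factor $\frac{|\mathcal{S}_k|}{|\mathcal{S}_k|-1}$. The key monotonicity is that this factor is strictly \emph{decreasing} in $|\mathcal{S}_k|$ for $|\mathcal{S}_k|\geq 2$ and, moreover, the two-or-more-user outage in \eqref{sym snr} is strictly smaller than the singleton outage in \eqref{sym snr2} in the high-SNR limit. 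The plan is therefore to show that each user's outage probability can only improve (weakly decrease) as the coalition it belongs to grows, because a larger pool of successfully-decoded sources supplies more shared harvested energy for relaying. Consequently $v(\mathcal{S})$ is monotone in the sense needed: merging two coalitions cannot reduce any member's individual contribution $R(1-\mathrm{P}_i)$, from which superadditivity $v(\mathcal{S}_1\cup\mathcal{S}_2)\geq v(\mathcal{S}_1)+v(\mathcal{S}_2)$ follows by summing over members.

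Given superadditivity, I would invoke the standard result that for a TU game the grand coalition is the welfare-maximizing partition, and then argue core nonemptiness to obtain \emph{stability}. Because the payoff is additive across users and each user's payoff is nondecreasing in coalition size, the grand coalition $\mathcal{N}$ maximizes $\sum_{i\in\mathcal{N}} R(1-\mathrm{P}_i)$ over all partitions $\mathcal{S}\in\mathcal{B}$; this identifies $\mathcal{N}$ as the solution of \eqref{original problem}. For stability I would exhibit an imputation in the core: allocate to each user its own achieved rate $x_i = R(1-\mathrm{P}_i)$ evaluated under the grand coalition. Since for every subcoalition $\mathcal{S}\subsetneq\mathcal{N}$ the per-user outage is no better than under $\mathcal{N}$, we have $\sum_{i\in\mathcal{S}} x_i \geq v(\mathcal{S})$, so no coalition has an incentive to deviate, and $\sum_{i\in\mathcal{N}} x_i = v(\mathcal{N})$ by construction; this is precisely the core condition.

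The main obstacle I anticipate is rigorously establishing the per-user monotonicity of $\mathrm{P}_i$ in $|\mathcal{S}_k|$ \emph{for all coalition sizes}, not merely the two asymptotic regimes separated in Corollary~\ref{corollary}. The asymptotics cleanly show that any coalition of size $\geq 2$ beats a singleton, and that the factor $\frac{|\mathcal{S}_k|}{|\mathcal{S}_k|-1}$ decreases, but one must confirm there is no nonmonotone behavior hidden in the subtracted constant term $\frac{\epsilon D^2}{2}$ and in the interplay between the decoding-outage floor $F_{x_i}(\epsilon/P^{ow})$ and the harvested-energy term. I would handle this by working in the strict $P^{ow}\to\infty$ limit where the coefficient of $\frac{1}{P^{ow}}$ is the only quantity that matters, so that the comparison reduces to comparing these coefficients; the decoding floor is common to all coalition sizes and thus cancels in any merge comparison, leaving the energy term, whose monotonicity is the clean algebraic fact $\frac{m}{m-1}>\frac{m'}{m'-1}$ for $m<m'$.
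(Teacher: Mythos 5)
Your proposal is correct and follows essentially the same route as the paper: superadditivity for coalitions of size at least two from the decreasing factor $\frac{|\mathcal{S}_k|}{|\mathcal{S}_k|-1}$ in \eqref{sym snr}, and elimination of singletons by comparing their slower $\frac{\log SNR}{SNR}$ decay in \eqref{sym snr2} against the $\frac{1}{SNR}$ decay of larger coalitions. The only difference is that you additionally carry out the core-nonemptiness argument, which the paper defers to Corollary \ref{corollary 3}.
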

\begin{proof}
The asymptotic result in \eqref{sym snr} reveals   the superadditivity property of the addressed game, i.e. $v(\mathcal{S}_i)+v(\mathcal{S}_j)<v(\mathcal{S}_i\cup \mathcal{S}_j)$, for any $|\mathcal{S}_i|\geq 2$ and $|\mathcal{S}_j|\geq 2$. The result in \eqref{sym snr2} reveals that the outage probability of a user forming a singleton set decays at a rate of $\frac{\log SNR }{SNR}$. Since  a faster decaying rate of $\frac{1}{SNR}$ can be  achieved by any coalition  with a size larger than $1$, singleton sets are not preferred, and any users in singleton sets will try to join in other existing coalitions. Therefore a grand coalition is a stable point of the system.
\end{proof}

Let $\mathbf{x}^*\triangleq \{x_n^* = \frac{1}{N}v(\mathcal{N}), n \in \{1, \ldots, N\}  \}$   denote  the player payoff vector. Following Corollaries \ref{corollary} and \ref{corollary 2}, the core of the addressed game can   be characterized as follows.
\begin{corollary}\label{corollary 3}{\it
When the SNR approaches infinity, the core of the considered canonical coalitional game with TU, denoted by $\mathcal{C}$, becomes non-empty and $\mathbf{x}^*\in \mathcal{C}$. }
\end{corollary}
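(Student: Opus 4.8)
The plan is to prove the statement constructively, by verifying that the equal-split vector $\mathbf{x}^*$ lies in the core $\mathcal{C}$; non-emptiness of $\mathcal{C}$ then follows at once from $\mathbf{x}^*\in\mathcal{C}$. Recall that for a TU coalitional game the core is the set of payoff vectors $\mathbf{x}$ satisfying two conditions: \emph{efficiency}, $\sum_{n\in\mathcal{N}}x_n=v(\mathcal{N})$, and \emph{coalitional (group) rationality}, $\sum_{i\in\mathcal{S}}x_i\geq v(\mathcal{S})$ for every coalition $\mathcal{S}\subseteq\mathcal{N}$. So I would check these two conditions in turn for $\mathbf{x}^*$.

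First I would exploit the symmetry of the game. Since no CSI is available prior to transmission and the $2N$ nodes are statistically identical (i.i.d. uniform locations and i.i.d. Rayleigh fading), the coalition value depends only on the cardinality of the coalition. Writing $v(s)$ for the value of any size-$s$ coalition and combining $v(\mathcal{S}_k)=\sum_{i\in\mathcal{S}_k}R(1-\mathrm{P}_i)$ with the common per-user outage probability $p(s)$ supplied by Corollary \ref{corollary}, this gives $v(s)=sR(1-p(s))$. Efficiency is then immediate from the definition $x_n^*=\frac{1}{N}v(\mathcal{N})$, since $\sum_{n}x_n^*=N\cdot\frac{1}{N}v(\mathcal{N})=v(\mathcal{N})$.

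The substantive step is coalitional rationality. For a coalition $\mathcal{S}$ of size $s$ we have $\sum_{i\in\mathcal{S}}x_i^*=\frac{s}{N}v(\mathcal{N})$, so the required inequality $\frac{s}{N}v(N)\geq v(s)$ is equivalent, after substituting $v(s)=sR(1-p(s))$ and dividing by $sR$, to $p(N)\leq p(s)$ for all $1\leq s\leq N$; that is, the grand coalition must minimize the per-user outage probability. I would establish this in two regimes. For $2\leq s\leq N$, the asymptotic expression \eqref{sym snr} has the form $p(s)=\left(A+B\frac{s}{s-1}\right)/P^{ow}$ with $A=\frac{\epsilon D^2}{2}>0$ and $B=\frac{(D^2+2)^2\epsilon}{4\eta}>0$; since $\frac{s}{s-1}=1+\frac{1}{s-1}$ is strictly decreasing in $s$, the quantity $p(s)$ decreases over $s\in\{2,\dots,N\}$ and hence $p(N)\leq p(s)$. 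This is precisely the superadditivity already recorded in Corollary \ref{corollary 2}. For the singleton case $s=1$, \eqref{sym snr2} shows that $p(1)$ decays only as $\frac{\ln P^{ow}}{P^{ow}}$, which is asymptotically larger than the $\frac{1}{P^{ow}}$ decay of $p(N)$, so $p(N)\leq p(1)$ once $P^{ow}$ is large enough. Combining the two regimes yields $p(N)\leq p(s)$ for every $s$, hence coalitional rationality, hence $\mathbf{x}^*\in\mathcal{C}$ and $\mathcal{C}\neq\emptyset$.

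The main obstacle is the singleton case. Unlike coalitions of size $\geq 2$, the value $v(1)$ cannot be read off the single monotone formula \eqref{sym snr}, because the outage exponent changes through the additional $\ln P^{ow}$ factor; I therefore cannot treat $p(s)$ as one smooth function of $s$ across all of $\{1,\dots,N\}$ and must instead invoke the strict asymptotic dominance $\frac{\ln P^{ow}}{P^{ow}}\gg\frac{1}{P^{ow}}$ to close the comparison against singletons. Some care is also needed to confirm the sign of the bracketed term in \eqref{sym snr2}, so that $p(1)$ is indeed positive and of the claimed order, before the dominance argument can be applied.
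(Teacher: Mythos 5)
Your proof is correct and follows essentially the same route as the paper: both reduce the core condition, via the symmetry $v(\mathcal{S})=|\mathcal{S}|R(1-\mathrm{P}_{|\mathcal{S}|})$, to showing that the grand coalition minimizes the per-user outage probability, using the monotonicity of $\frac{s}{s-1}$ in \eqref{sym snr} for $s\geq 2$ and the slower $\frac{\log P^{ow}}{P^{ow}}$ decay of \eqref{sym snr2} for singletons. The only difference is that you verify group rationality directly while the paper phrases the same comparison as a proof by contradiction, which is purely presentational.
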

\begin{proof}
The payoff vector $\mathbf{x}^*$ is group-rational. And it is also individually rational since the outage probability of a user in a singleton set   decays at a slow rate of $\frac{\log SNR}{SNR}$. The proof that  $\mathbf{x}^*\in \mathcal{C}$ can be shown by   contradiction. Suppose that there is a coalition $\mathcal{S}$ that  rejects the proposed payoff allocation, which means $\sum_{i\in \mathcal{S}}x_i^*<v(\mathcal{S}) $. Recall $v(\mathcal{S}) =R\sum_{i\in \mathcal{S}}(1-\mathrm{P}_i)$. From Corollary \ref{corollary}, we first observe that the outage probability depends on only the size of the coalition, so the value of $\mathcal{S}$ can be simplified to $v(\mathcal{S}) =R|\mathcal{S}|(1-\mathrm{P}_{|\mathcal{S}|})$. Similarly for the grand set, we have $v(\mathcal{N}) =R|\mathcal{N}|(1-\mathrm{P}_{|\mathcal{N}|})$, which means $x_i^*=R(1-\mathrm{P}_{|\mathcal{N}|})$. The use of Corollary \ref{corollary} yields $\mathrm{P}_{|\mathcal{N}|}<\mathrm{P}_{|\mathcal{S}|}$, and therefore  $\sum_{i\in \mathcal{S}}x_i^*>v(\mathcal{S}) $, which contradicts the initial claim. The proof is completed.
\end{proof}

\section{Numerical Results}
In this section, simulation results are provided to demonstrate the accuracy of the developed analytical results. In Fig. \ref{fig_1}, the analytical results developed in Theorem \ref{theorem1} and the asymptotic results shown in Corollary \ref{corollary} are compared to computer simulations. Recall that the approximation steps, such as the ones in \eqref{app1} and \eqref{app3}, assume  $\frac{D^2}{P^{ow}}\rightarrow 0$. As can be seen from the figure, for a small $\mathcal{D}$, the developed analytical results  match exactly with  simulations, and for a case with a larger $\mathcal{D}$ there will be a gap which can be reduced by further increasing the SNR.  In Fig. \ref{fig_2}, the outage probability with different size   coalitions is shown as a function of SNR. As can be seen from the figure, a larger coalition is always beneficial since users experience fewer  errors and hence receive higher payoff. Particularly the case with a grand coalition achieves the best performance, which confirms Corollary \ref{corollary 2}.

\section{Conclusion}
In this paper, we have developed the outage probability for users in  a cooperative network with  an  energy harvesting relay.   In addition, the cooperation among users has been  modeled as a canonical coalitional game and the grand coalition has been  shown to be stable in the addressed scenario.

\vspace{-1.5em}
\begin{figure}[!htbp]\centering
    \epsfig{file=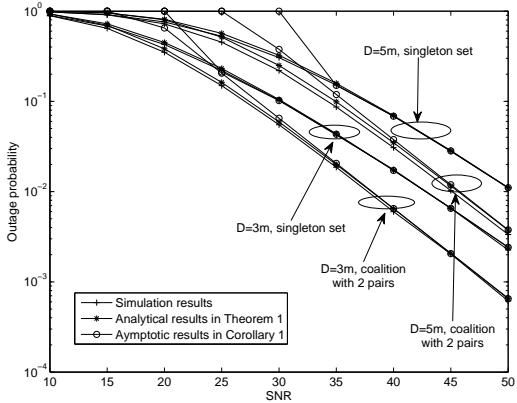, width=0.4\textwidth, clip=}\vspace{-1em}
\caption{Simulation vs analytical results. $R=0.5$ bit per channel use (BPCU).}\label{fig_1}\vspace{-2em}
\end{figure}
\vspace{-0.5em}
\begin{figure}[!htbp]\centering
    \epsfig{file=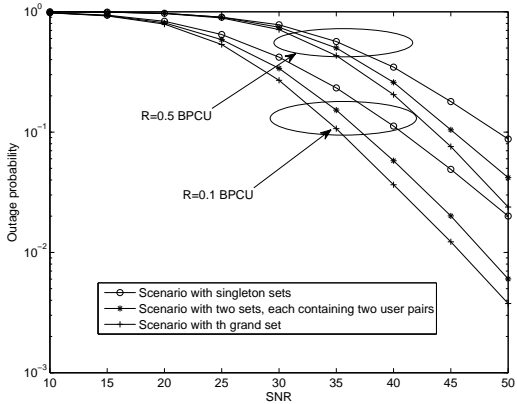, width=0.4\textwidth, clip=}\vspace{-1em}
\caption{Outage probability for coalitions with different sizes.}\label{fig_2}\vspace{-2em}
\end{figure}
\appendix
\textsl{Proof of Theorem \ref{theorem1} :}
Define $x_i\triangleq \frac{|h_i|^2}{1+d_i^2}$ and $y_i\triangleq \frac{|g_i|^2}{1+c_i^2}$.
Note that $x_i$ and $y_i$ are independent and identically   distributed (i.i.d.). $h_i$ and $g_i$ are i.i.d. complex Gaussian variables, and the density function of the distance can be found by using the fact that the locations of the nodes are uniformly distributed in  $\mathcal{D}$  \cite{Kingmanbook}.  For any area $A$ with a size of $ {\Delta}$ and  $A\in \mathcal{D}$, the distribution  of each point $W$ is $\mathrm{P}(W\in A)=\frac{\Delta}{\pi D^2}$, and the corresponding probability density function (pdf) is $p_{W}(w)=\frac{1}{\pi D^2}$. In particular, it can be shown that the cumulative distribution function (CDF) of $x_i$ is
\begin{align}\label{eq3}
F_{x_i}(z)&\triangleq \mathrm{P}\left(x_i<z\right) = \underset{  \mathcal{D}}{ \int} \left(1-e^{-(1+d_i^2)z}\right)p_{W}(w)dw,
\end{align}
where   $d_i$ is determined by  the distance between the point $W$ and the origin \cite{Wangpoor11}.
By applying   polar coordinates,  the CDF of $x_i$ can be obtained as  in Theorem \ref{theorem1}, and its pdf is
\begin{eqnarray}\label{eq2}
f_{x_i}(z) = \frac{(1+z)e^{-z} - (1+z+zD^2)e^{-(1+D^2)z} }{D^2z^2}.
\end{eqnarray}

 With   equal power allocation,  the outage probability of the $i$-th user pair, $i\in\mathcal{S}_k$, is given by{\small
\begin{align} \label{outage probability}
\mathrm{P}_i &= F_{x_i}\left( \frac{\epsilon}{P^{ow}}  \right)+\underset{Q_1}{\underbrace{\sum^{|\mathcal{S}_k|}_{n=1}\mathrm{P}\left(  i \in \tilde{\mathcal{S}_k},  y_i <\frac{\epsilon}{\tilde{P}_i^{ow}} , |\tilde{\mathcal{S}}_k|=n\right)}}.
\end{align}}
  The first probability in \eqref{outage probability}  is  $F_{x_i}\left(\frac{\epsilon}{P^{ow}}\right)$, and the second probability, $Q_1$, can be expanded as follows:
\begin{eqnarray} \nonumber
\mathrm{Q}_1 =  \sum^{|\mathcal{S}_k|}_{n=1}\frac{n}{| {\mathcal{S}}_k|}\underset{Q_{2,n}}{\underbrace{\mathrm{P}\left( y_i<\frac{\epsilon}{\tilde{P}_i^{ow}} \left| |\tilde{\mathcal{S}}_k|=n\right.\right)}}\mathrm{P} (|\tilde{\mathcal{S}}_k|=n ).
\end{eqnarray}
Since the source-relay channels are i.i.d.,  $\mathrm{P} (|\tilde{\mathcal{S}}_k|=n )$ is
\begin{eqnarray} \nonumber
\mathrm{P} (|\tilde{\mathcal{S}}_k|=n ) =\frac{|\mathcal{S}_k|! \left(F_{x_i}\left(\frac{\epsilon}{P^{ow}}\right) \right)^{|\mathcal{S}_k|-n}  }{(|\mathcal{S}_k|-n)!n!}   \left(1-F_{x_i}\left(\frac{\epsilon}{P^{ow}}\right) \right)^{n}.\label{q2 1}
\end{eqnarray}
On denoting  $z=\sum_{j\in\tilde{\mathcal{S}}_k}  x_j$,    $Q_{2,n}$ is given by
\begin{eqnarray}\label{equation integral}
\mathrm{Q}_{2,n} &=& \mathrm{P}\left(y_i<\frac{\epsilon}{\tilde{P}_i^{ow}} \left| |\tilde{\mathcal{S}}_k|=n\right.\right)  \\ \nonumber  &=& \int^{\infty}_{n \frac{\epsilon}{P^{ow}}}F_{x_i}\left(\frac{\epsilon}{\frac{\eta P^{ow} }{|\tilde{\mathcal{S}}_k|} z - \epsilon \eta} \right)f_{z||\tilde{\mathcal{S}}_k|=n}(z)dz.
\end{eqnarray}
Directly finding the pdf of $z$ conditioned on  $|\tilde{\mathcal{S}}_k|=n$,  $f_{z||\tilde{\mathcal{S}}_k|=n}(z)$,  is difficult. To obtain an approximation of $\mathrm{Q}_{2,n}$, first define $\gamma(z,P^{ow})\triangleq\epsilon\left(\frac{\eta P^{ow} }{|\tilde{\mathcal{S}}_k|} z - \epsilon \eta\right)^{-1}$, and observe that  $F_{x_i}\left(\gamma(z,P^{ow})\right)$ exhibits the following property:
\begin{align}\nonumber
&F_{x_i}\left(\gamma(z,P^{ow}) \right)  \approx  1- \frac{\left(1-\gamma(z,P^{ow})+\frac{1}{2}\gamma^2(z,P^{ow})\right) }{D^2\gamma(z,P^{ow})} \\ \nonumber & +\frac{\left(1-\gamma(z,P^{ow})(1+D^2) +\frac{1}{2}\gamma^2(z,P^{ow})(1+D^2)^2 \right)}{D^2\gamma(z,P^{ow})}
  \\ \label{app1}&=\frac{1}{2}\gamma(z,P^{ow})(D^2+2)\rightarrow 0,
\end{align}
for any fixed $z$ and a sufficiently large $P^{ow}$. Or in other words, when the transmission  power is large,  the probability   quickly decreases  to zero, i.e.  $F_{x_i}\left(\gamma(z,P^{ow})\right)\rightarrow 0$, for $z>e$ and $e$ is small.

This means that for the integral in \eqref{equation integral}, a good  approximation of $f_{z||\tilde{\mathcal{S}}_k|=n}(z)$, denoted by $\tilde{f}_{z||\tilde{\mathcal{S}}_k|=n}(z)$, needs to ensure  $\tilde{f}_{z||\tilde{\mathcal{S}}_k|=n}(z)=f_{z||\tilde{\mathcal{S}}_k|=n}(z)$ for $z\rightarrow 0$, and a slight difference between two functions for a large $z$ has an insignificant impact on the integral since  $F_{x_i}\left(\gamma(z,P^{ow})\right)\rightarrow 0$ for a large $z$. Recall that $z=\sum_{j\in\tilde{\mathcal{S}}_k}  x_j$, and the above rationale   motivates us to find an approximation of $f_{x_j}(z)$ for $z \rightarrow 0$. Particularly when $z\rightarrow 0$, the pdf of $x_j$ can be approximated  as
\begin{eqnarray} \nonumber
f_{x_j}(z) &\approx& \frac{1}{D^2z^2}(1+z)\left(1-z+\frac{z^2}{2}\right)  -  \frac{1}{D^2z^2}   (1+z\\ \nonumber &&+zD^2)\left(1- (1+D^2)z+\frac{1}{2} (1+D^2)^2z^2 \right)\\ \nonumber &=& \left(\frac{1}{2}D^2+1\right) +o(z^2).
\end{eqnarray}
Directly using the approximation $ {f}_{x_j}(z)\approx \frac{1}{2}D^2+1$ means that the range of the variable is limited as $\begin{bmatrix} 0 , \frac{2}{D^2+2} \end{bmatrix}$ to ensure the CDF normalized.  Instead we use the following approximation:
\begin{eqnarray}
f_{x_j}(z) \approx \left(\frac{1}{2}D^2 +1\right)e^{-\left(\frac{1}{2}D^2+1\right)z},
\end{eqnarray}
an exponential distribution with the parameter $ \left(\frac{1}{2}D^2+1\right)$. As can be seen in Section IV, this approximation is tightly matched  with  the simulation results even in the moderate SNR regime.

Conditioned on $x_j>\frac{\epsilon}{P^{ow}}$, the Laplace transform of $x_j$ is
\begin{align}\nonumber
\mathcal{L}(f_{x_j|x_j>\frac{\epsilon}{P^{ow}}}(t))&= \frac{1}{1-F_{x_j}(\frac{\epsilon}{P^{ow}})}\int^{\infty}_{\frac{\epsilon}{P^{ow}}}e^{-st}f_{x_j}(t)dt
\\\nonumber &\approx   \left(\frac{1}{2}D^2+1\right) e^{\frac{D^2\epsilon}{2P^{ow}}} \frac{e^{-(s+\left(\frac{1}{2}D^2+1\right))\frac{\epsilon}{P^{ow}}}}{s+\left(\frac{1}{2}D^2+1\right)}.\end{align}
The Laplace transform for the pdf of the sum  is
\begin{eqnarray}\nonumber
\mathcal{L}(f_{z||\tilde{\mathcal{S}}_k|=n}(z))\approx  \left(\frac{1}{2}D^2+1\right)^n \frac{e^{-\frac{n\epsilon s}{P^{ow}}}}{\left(s+\left(\frac{1}{2}D^2+1\right)\right)^n},
\end{eqnarray}
which yields the pdf of $z$ conditioned on $|\tilde{\mathcal{S}}_k|=n$ as follows:
\begin{eqnarray}\nonumber
f_{z||\tilde{\mathcal{S}}_k|=n}(z)\approx   \frac{a^n  \left(z-n\frac{\epsilon}{P^{ow}}\right)^{n-1}e^{-a\left(z-n\frac{\epsilon}{P^{ow}}\right)}  }{(n-1)!},
\end{eqnarray}
for $\frac{n\epsilon}{P^{ow}}\leq z \leq \infty.$
On substituting it into the probability of $Q_{2,n}$ we obtain the following:
\begin{eqnarray}\nonumber
\mathrm{Q}_{2,n} &=&  \int^{\infty}_{n\frac{\epsilon}{P^{ow}}}F_{x_i}\left(\frac{n\epsilon}{\eta P^{ow}}\frac{1}{ z -n\frac{\epsilon}{  P^{ow} }    } \right)f_{z||\tilde{\mathcal{S}}_k|=n}(z)dz\\ \nonumber
&\approx&    \frac{a^n}{(n-1)!} \int^{\infty}_{0}F_{x_i}\left(\frac{n\epsilon}{\eta P^{ow}}\frac{1}{t   } \right)e^{-at}t^{n-1}dt .
\end{eqnarray}
After some algebra  manipulations, the probability $Q_{2,n}$ can be expressed as
\begin{align}\label{eq q2}
\mathrm{Q}_{2,n} &\approx  \frac{ a^{n}  }{(n-1)! }  \left((n-1)!a^{-n} \right. \\ \nonumber &-\frac{2}{D^2\varpi}  \left(\frac{\varpi}{a}\right)^{\frac{n+1}{2}} \mathrm{K_{n+1}}\left(2\sqrt{\varpi a}\right)+\frac{2}{D^2\varpi}\\   &\times\left.  \left(\frac{(1+D^2)\varpi}{a}\right)^{\frac{n+1}{2}} \mathrm{K_{n+1}}\left(2\sqrt{(1+D^2)\varpi a}\right)\right). \nonumber
 \end{align}
By combining \eqref{outage probability}, \eqref{q2 1}  and \eqref{eq q2}, the theorem is proved.   \hspace{\fill}$\blacksquare$\newline

\textsl{Proof of Corollary \ref{corollary} :}
By applying the series representation of Bessel functions,  $x^n\mathbf{K}_n(x)$ can be approximated as \cite{GRADSHTEYN}
\begin{eqnarray}\nonumber
x^2\mathbf{K}_2(x) = \frac{1}{2}\left(4-x^2 \right) +\frac{x^4}{8}\left(-\ln \frac{x}{2} -c_0\right)+o(x^6\ln x) \label{approximation},
\end{eqnarray}
for $x\rightarrow 0$, where $c_0=C-\frac{3}{4}$ and $C$ is Euler's constant. And
\begin{eqnarray}\nonumber
x^n\mathbf{K}_n(x) =  \frac{1}{2}\sum^{n-1}_{l=0}\frac{(-1)^l(n-l-1)!}{l!} \frac{x^{2l}}{2^{2l-n}} +o(x^{2n}\ln x) \label{approximation1},
\end{eqnarray}
for $n\geq 3$.
For  $n\geq 3$,  $\mathrm{Q}_{2,n}$ can be approximated  as follows:
 \begin{eqnarray}\nonumber
\mathrm{Q}_{2,n} \approx
 \frac{a^{n}  }{(n-1)!} \left((n-1)!a^{-n}-\frac{1}{D^2\varpi 2^{n+1} a^{n+1}}  \right.  \\ \nonumber \left.\times \sum^{n}_{j=0}\frac{(-1)^j(n-j)!}{j!} \frac{ \left(2\sqrt{\varpi a}\right)^{2j}}{2^{2j-n-1}} +\frac{1}{D^2\varpi 2^{n+1} a^{n+1}}\right. \\ \nonumber \left. \times\sum^{n}_{j=0}\frac{(-1)^j(n-j)!}{j!} \frac{ \left(2\sqrt{(1+D^2)\varpi a}\right)^{2j}}{2^{2j-n-1}}  \right)
\approx
 \frac{a\left( D^2+2    \right) }{2(n-1)}  \varpi.
\end{eqnarray}
And for $n=2$, we can obtain  the following:
\begin{align}\nonumber
\mathrm{Q}_{2,2}  &\approx   \frac{ \varpi(D^2+2) }{2D^2 }  \left(   \left(\ln \sqrt{\varpi a}+c_0\right)    -  (1+D^2)^2\right. \\ \nonumber &\left.(\ln \sqrt{(1+D^2)\varpi a}+c_0)  \right).
\end{align}
Note that  $\mathrm{Q}_{2,n}$ for  $n\geq 3$ decays  at a rate of $\frac{1}{P^{ow}}$ and  $\mathrm{Q}_{2,2}$ has a much slower decay rate of $\frac{\log P^{^{ow}}}{P^{ow}}$.

By applying the above approximations, when $|\mathcal{S}_k|\geq 2$, the outage probability experienced by a user in $\mathcal{S}_k$ is asymptotically equivalent to the following expression:
\begin{eqnarray}\nonumber
\mathrm{P}_i &=& F_{x_i}\left(\frac{ \epsilon}{P^{ow}}\right)+\sum^{|\mathcal{S}_k|}_{n=1}\frac{n}{| {\mathcal{S}}_k|}Q_{2,n}\cdot \mathrm{P} (|\tilde{\mathcal{S}}_k|=n )\\ \nonumber &\rightarrow&\frac{\epsilon D^2}{2P^{ow}}+\sum^{|\mathcal{S}_k|}_{n=1}\frac{n}{| {\mathcal{S}}_k|}  \frac{a\left( D^2+2    \right)  }{2(n-1)}   \left(    \frac{n\epsilon}{\eta P^{ow}}    \right)\\ \nonumber &&\times \frac{|\mathcal{S}_k|!}{(|\mathcal{S}_k|-n)!n!}  \left( \frac{\epsilon}{2P^{ow}} \right)^{|\mathcal{S}_k|-n}
\\   &=&\frac{\epsilon D^2}{2P^{ow}}+  \frac{a \left( D^2+2    \right)  }{2 (|\mathcal{S}_k|-1) }        \frac{|\mathcal{S}_k|\epsilon}{\eta }     \frac{1}{P^{ow}},\label{app3}
\end{eqnarray}
and the first part of the corollary is proved. The second part of the corollary  can be proved using similar steps.  \hspace{\fill}$\blacksquare$\newline
\vspace{-1em}

 \bibliographystyle{IEEEtran}
\bibliography{IEEEfull,trasfer}
\end{document}